\documentclass[letterpaper, 10 pt, conference,preprint]{ieeeconf} 

\IEEEoverridecommandlockouts

\overrideIEEEmargins

\usepackage[utf8]{inputenc}
\usepackage[T1]{fontenc}
\usepackage{amsfonts}
\usepackage{amssymb}  
\usepackage{amsmath} 

\usepackage{mathtools}
\usepackage{amscd}
\usepackage{color}
\usepackage[english]{babel}

\usepackage{transparent}
\usepackage{tikz}

\usepackage{amsthm}
\usepackage{mdframed}


\usepackage[noadjust]{cite} 

\usepackage{pgfplots}
\pgfplotsset{compat=newest}
\newlength\fwidth


\newcommand\copyrighttext{%
	\footnotesize \copyright 2019 IEEE. Personal use of this material is permitted.  Permission from IEEE must be obtained for all other uses, in any current or future media, including reprinting/republishing this material for advertising or promotional purposes, creating new collective works, for resale or redistribution to servers or lists, or reuse of any copyrighted component of this work in other works.}
\newcommand\copyrightnotice{%
	\begin{tikzpicture}[remember picture,overlay]
	\node[anchor=south,yshift=10pt] at (current page.south) {\fbox{\parbox{\dimexpr\textwidth-\fboxsep-\fboxrule\relax}{\copyrighttext}}};
	\end{tikzpicture}%
}

\graphicspath{{figures/}{Figures/}{Abbildungen/}}

\newcommand{\barx}[1]{\mkern 1.5mu\overline{\mkern-1.5mu#1\mkern-1.5mu}\mkern 1.5mu}
\newcommand{\bs}[1]{\boldsymbol{#1}}
\newcommand{\sgn}{\text{sgn}}

\newtheoremstyle{test}
	{}
	{}
	{}
	{}
	{\itshape}
	{}
	{.5em}
	{\thmname{#1} \thmnumber{#2}:}
\theoremstyle{test}
\newtheorem{lemma}{Lemma}
\newtheorem{remark}[lemma]{Remark}
\newtheorem{assumption}[lemma]{Assumption}
\newtheorem{problem}{Problem}

\title{\LARGE \bf
	Economic model predictive control for snake robot locomotion	
}

\author{Marko Nonhoff, Philipp N. Köhler, Anna M. Kohl, Kristin Y. Pettersen, and Frank Allg\"ower
	\thanks{Marko Nonhoff is with the Institute of Automatic Control, Leibniz University Hannover, 30167 Hannover, Germany, Philipp N. Köhler and Frank Allgöwer are with the Institute for Systems Theory and Automatic Control, University of Stuttgart, 70550 Stuttgart, Germany.
		Anna Kohl and Kristin Y. Pettersen are with Centre for Autonomous Marine Operations and Systems (NTNU AMOS), Department of Engineering Cybernetics, Norwegian University of Science and Technology, 7491 Trondheim, Norway.}
}

\begin{document}
	
	\maketitle
	\thispagestyle{empty}
	\pagestyle{empty}
	\copyrightnotice
	
	\begin{abstract}
		
	In this work, the control of snake robot loco-motion via economic model predictive control (MPC) is studied. Only very few examples of applications of MPC to snake robots exist and rigorous proofs for recursive feasibility and convergence are missing. We propose an economic MPC algorithm that maximizes the robot's forward velocity and integrates the choice of the gait pattern into the closed loop. We show recursive feasibility of the MPC optimization problem, where some of the developed techniques are also applicable for the analysis of a more general class of system. Besides, we provide performance results and illustrate the achieved performance by numerical simulations. We thereby show that the economic MPC algorithm outperforms a standard lateral undulation controller and achieves constraint satisfaction. Surprisingly, a gait pattern different to lateral undulation results from the optimization.	
	\end{abstract}

	\section{Introduction}
	
	
	There has been active research on the mechanisms and control of snake robot locomotion over the last decades. As opposed to robots which move in a more traditional way, for example wheeled or legged robots, snake robots carry the potential to not only move in cluttered and irregular environments, but also make use of obstacles to aid in their locomotion. Since the snake robot is a robot manipulator arm that can also locomote, it has a wide range of applications, including firefighting as well as search and rescue tasks. Correspondingly, the scope of research activities ranges from movement in tight spaces on land to subsea operations. 
	An overview over previous literature on modelling, analysis, control and application of snake robots is found in~\cite{Liljeback2012Overview,Liljeback2012, PETTERSEN2017}.
	
	The complex dynamics of snake robot locomotion, possible environment interactions, and the presence of constraints on both the input and the states make model predictive control (MPC) a promising approach to control the motion of snake robots. MPC is a control method that solves a finite-horizon optimal control problem at every sampling time instance and applies the first part of the optimal input, see for example \cite{Rawlings2009}.
	In many applications, setpoint stabilization may not be the primary control objective, but rather optimization of some general performance criterion. For this reason, so called economic MPC was developed, which allows to minimize a general performance criterion \cite{MUller2017_EconomicandDistributed}. Economic MPC has been studied in detail and convergence, stability, and performance results are available for the closed loop (see, e.g., \cite{MUller2017_EconomicandDistributed, Ellis2014, angeli2012,Muller2016}). 
	It is a well-known property of economic MPC that it can lead to periodic behavior which commonly occurs in snake robot locomotion. Nevertheless, only very few results exist for applications of MPC to the control of snake robot locomotion. In \cite{Marafioti2014}, MPC is employed for path following of a snake robot in terms of optimizing the parameters of a predefined gait pattern. However, the gait pattern was chosen offline and no rigorous proofs for recursive feasibility of the MPC optimization problem were given for the proposed controller. 
	
	The work at hand develops a theoretical basis and provides first results on utilizing Model Predictive Control techniques for the control of snake robot locomotion without a predefined gait pattern. Therefore, we consider a simplified model and assume the snake robot moves on a flat and unbounded surface.
	In particular, by directly computing the input signals for each joint of the snake robot, the choice of the gait pattern is integrated into the closed loop, as opposed to existing approaches which, e.g., employ feedback controllers to track a predefined gait pattern \cite{Liljeback2012Overview,Liljeback2012, PETTERSEN2017} or central pattern generators~\cite{IJSPEERT2008}.

	Thereby, \pubidadjcol we enable the snake robot to adapt its motion to, e.g., a changing environment, faults or changing performance criteria given by an altered cost function. For simplicity, we focus on maximizing the forward velocity of the snake robot throughout this work, since this yields a simple value function with an intuitive physical interpretation and is a reasonable objective of locomotion. Nevertheless, the central ideas are applicable to more \mbox{complex} stage costs as well.
	
	This paper is structured as follows: In Section 2, we introduce the mathematical model of the snake robot used in this work and review the lateral undulation gait pattern and a corresponding controller from the literature. \mbox{Section 3} introduces the economic MPC scheme, which is investigated \mbox{analytically} regarding recursive feasibility and performance in Section 4. Subsequently, numerical simulations are shown in Section 5, and concluding remarks are given in Section 6.
	
	\textit{Notation:} Let $\mathbb{I}_{[a,b]}$ denote the set of all integers in the interval $[a,b] \subset \mathbb{R}$, and let $\mathbb{I}_{>a}$ denote the set of all integers larger than $a \in \mathbb{R}$. $\lceil x \rceil$ is the ceiling function, i.e., \mbox{$\lceil x \rceil = \min \{ n \in \mathbb{I} | n \geq x \}$,} and $\sgn (x)$ represents the signum function.

	We consider a discrete-time nonlinear system
	\begin{align} \label{eq:defgensys}
	x(t+1) &= f(x(t),u(t)), \hspace{1cm} x(0) = x_0,
	\end{align} 
	where $f:\mathbb{X} \times \mathbb{U} \rightarrow \mathbb{R}^n$, $x(t) \in \mathbb{X} \subseteq \mathbb{R}^n$ and \mbox{$u(t) \in \mathbb{U} \subseteq \mathbb{R}^m$} are the system dynamics, the state and the control input at time $t \in \mathbb{I}_{\geq 0}$, and $\mathbb{X}$ and $\mathbb{U}$ denote the state and input constraint sets, respectively. 
	The solution of system~\eqref{eq:defgensys} for a control sequence $\bs u = ( u(0), \ldots, u(K-1) ) \in \mathbb{U}^{K}$ starting at the initial value $x_0 \in \mathbb{X}$ is denoted by $\bs x^{\bs u}(t, x_0)$, $t=0,\dots,K$, which is abbreviated as $\bs x^{\bs u}(t)$ whenever $x_0$ is clear from the context.
	
	\section{Problem Setup}
	
	In this section, we present a mathematical model of the snake robot which will be used for controller design, analysis and simulations in the remainder of this work. Modelling a snake robot in full detail yields a dynamical system too complicated for contoller design. For this reason, we will use a simplified model throughout this work, specifically developed for  controller design and analysis. A detailed derivation of this simplified model can be found in \cite{Liljeback2012,Liljeback2013}. The main idea is to describe the robot by a serial connection of translational instead of revolute joints connecting the links of the snake robot, since analysis shows that it is the transversal motion of the links that is significant for forward motion. A schematic representation of the modelling approach is provided in Figure \ref{fig:snakemodel}.
	
	\begin{figure}[t]
		\centering
		\def\svgwidth{.45\textwidth}
		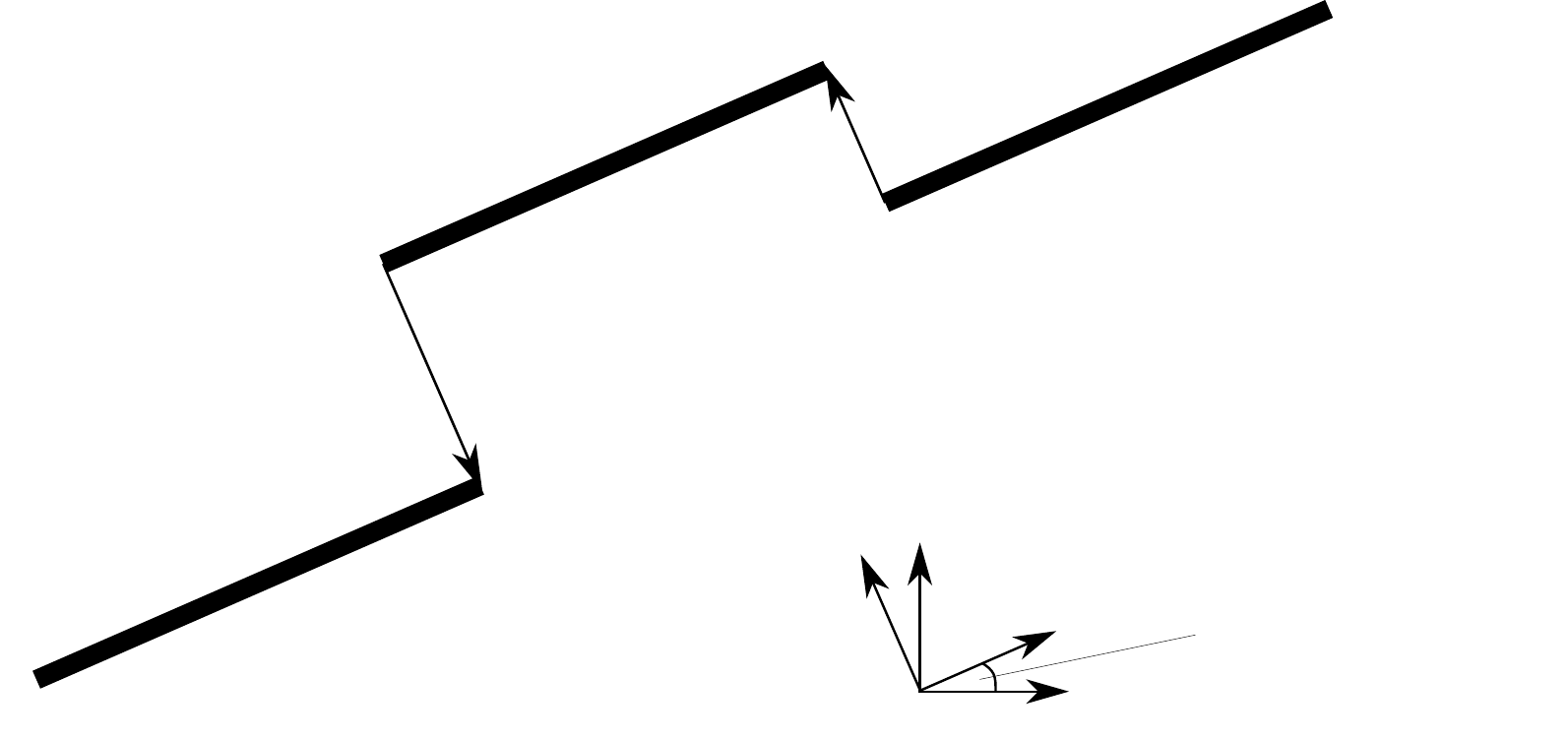
		\caption{Schematic representation of the snake model and the used coordinate systems.}
		\label{fig:snakemodel}
	\end{figure}
	
	We consider a planar snake robot consisting of $N_l$ links, all with the same mass $m$ and length $l$, which are interconnected by $N_l-1$ translational joints. The center of mass of each link is located at its center point. The robot moves on a horizontal and flat surface and is driven by $N_l -1$ actuators, one at each joint. 
	
	First, we define a set of matrices which will be used extensively in this work.
	\begin{flalign*}
	A &= \left[ \begin{smallmatrix} 1 & 1 &   &  \\
	& \ddots & \ddots &    \\
	&   & 1 & 1 \end{smallmatrix} \right]
	\in \mathbb{R}^{(N_l-1) \times N_l}, \\
	D &= \left[ \begin{smallmatrix} 1 & -1 &   &  \\
	& \ddots & \ddots &    \\
	&   & 1 & -1 \end{smallmatrix} \right]
	\in \mathbb{R}^{(N_l-1) \times N_l}, \\
	e &= [1, \ldots, 1]^T \in \mathbb{R}^{N_l}, \qquad \barx{e} = [1, \ldots, 1]^T \in \mathbb{R}^{N_l-1}, \\
	\barx{D} &= D^T \left( D D^T \right) ^{-1} \in \mathbb{R}^{N_l \times (N_l-1)}.
	\end{flalign*}
	
	The matrices $A$ and $D$ represent an addition and subtraction, respectively, of adjacent elements of a vector. We assume that the snake robot is subject to an anisotropic viscous ground friction force. More specifically, the ground friction normal to the link is greater than the ground friction parallel to the link. Both for biological snakes and for snake robots, this is the essential feature enabling them to propel forward. We denote the friction coefficient in normal direction by \mbox{$c_n \in \mathbb{R}_{>0}$} and in tangential direction by \mbox{$c_t \in \mathbb{R}_{>0}$.} Furthermore, we define a propulsion coefficient $c_p$ as
	\begin{align*}
	c_p &= \frac{c_n-c_t}{2l}.
	\end{align*}
	
	We are now ready to state the complete simplified model. As opposed to \cite{Liljeback2012,Liljeback2013}, we give a discretized version, since our MPC scheme will be formulated in discrete-time with sampling time $T_s$.
	\begin{subequations}
		\begin{align}
		\phi(t+1) &= \phi(t) + T_s v_{\phi}(t) \label{eq:SysDynBegin} \\ 
		\theta(t+1) &= \theta(t) + T_s v_{\theta}(t) \\
		p_{x}(t+1) &= p_{x}(t) + T_s \big( v_{t}(t) \cos(\theta(t)) - v_{n}(t) \sin(\theta(t)) \big) \\
		p_{y}(t+1) &= p_{y}(t) + T_s \big( v_{t}(t) \sin(\theta(t)) + v_{n}(t) \cos(\theta(t)) \big) \\
		v_{\phi}(t+1) &= v_{\phi}(t) + T_s u(t) \label{eq:SysDynvphi} \\
		v_{\theta} (t+1) &= v_{\theta}(t) + T_s \left( -\lambda_1 v_{\theta}(t) + \frac{\lambda_2}{N_l-1} v_{t}(t) {\barx{e}}^T \phi(t) \right) \\
		v_{t}(t+1) &= v_{t}(t) + T_s \left(-\frac{ c_t}{m} v_{t}(t) + \frac{2 c_p}{N_lm} v_{n}(t) {\barx{e}}^T \phi(t) \right. \nonumber \\ & \quad \left. -\frac{c_p}{N_lm} \phi^T(t) A {\barx{D}} v_{\phi}(t) \right) \label{eq:SysDynvt} \\ 
		v_{n}(t+1) &= v_{n}(t) + T_s \left( - \frac{c_n}{m} v_{n}(t) + \frac{2 c_p}{N_lm} v_{t}(t) {\barx{e}}^T \phi(t) \right), \label{eq:SysDynEnd}
		\end{align}
		\label{eq:SysDyn}
	\end{subequations}
	where $\phi(t) \in \mathbb{R}^{N_l-1}$ and $v_\phi(t) \in \mathbb{R}^{N_l-1}$ denote the joint distances and velocities, $(p_x(t),p_y(t)) \in \mathbb{R}^2$ represents the position of the snake robot's center of mass in the global frame, $v_t(t) \in \mathbb{R}$ and $v_n(t) \in \mathbb{R}$ are the tangential and normal velocities of the center of mass in the $t-n$ frame, \mbox{$\theta(t) \in \mathbb{R}$} denotes the orientation and $v_\theta(t) \in \mathbb{R}$ the snake robot's rotational velocity. The parameters $\lambda_1 \in \mathbb{R}$ and $\lambda_2 \in \mathbb{R}$ are empirical constants which describe the rotational dynamics. A reasonable choice of parameters for this model is presented in \cite{Liljeback2012}.
	
	\begin{remark} Note that we directly consider the input $u$ obtained through an input transformation as shown in \cite{Liljeback2012,Liljeback2013}, leading to the simple dynamics \eqref{eq:SysDynvphi}. \end{remark}
	
	Due to mechanical restrictions of the snake robot and the model only being valid for joint distances $\phi(t)$ which are sufficiently small \cite{Liljeback2012}, state constraints on the joint distances $\phi(t)$ and the corresponding velocities $v_\phi(t)$ and input constraints need to be respected. These constraints are given as box constraints, hence,
	{
	\setlength{\belowdisplayskip}{.5ex}
	\begin{align}
		\begin{split}\mathbb{X} = \{ &x(t) \in \mathbb{R}^{2N_l + 4} \,\big|\, \phi_i(t) \in [-\phi_{\max}, \phi_{\max}], \\ &v_{\phi,i}(t) \in [-v_{\phi,\max}, v_{\phi,\max}] \hspace{.2cm}\forall i \in \mathbb{I}_{[1,N_l-1]}  \}, \end{split} \label{const:states} \\
		\mathbb{U} = \{ &u(t) \in \mathbb{R}^{N_l-1} | u_i(t) \in [-u_{\max}, u_{\max}] \hspace{.2cm} \forall i \in \mathbb{I}_{[1,N_l-1]} \}, \label{const:input}
	\end{align}
}
	where
	\begin{align} \label{eq:defXsnake}
	x(t) &= [\phi(t), \theta(t), p_x(t), p_y(t), v_\phi(t), v_\theta(t), v_t(t), v_n(t)]^T, 
	\end{align}
	and $\phi_{\max} \in \mathbb{R}_{>0}$, $v_{\phi,\max} \in \mathbb{R}_{>0}$ and $u_{\max} \in \mathbb{R}_{>0}$. We assume that the remaining states are unconstrained.
	
	In \cite{Liljeback2012,Liljeback2013}, a controller was presented which steers the joint distances $\phi(t)$ to a given reference trajectory, defined by the gait pattern lateral undulation (LU). This gait pattern propagates a body wave from head to tail of the snake robot and is defined as
	\begin{align} \label{eq:DefLU}
	\phi_{\mathrm{LU},i}(t) &= \alpha \sin \left( \omega t + (i-1) \delta \right),
	\end{align}
	where $i \in \mathbb{I}_{[1, N_l-1]}$, and $\alpha,\omega,\delta \in \mathbb{R}$ are constant parameters. This gait pattern was studied in detail, e.g., in \cite{Liljeback2012,Saito2002}.
	
	The corresponding lateral undulation controller is given by
	\begin{align}
	\begin{split}
	u(t) &= u_{\mathrm{ref}}(t) + k_d ({v}_{\phi,\mathrm{ref}}(t) - {v}_\phi(t)) \\ &\quad + k_p (\phi_{\mathrm{ref}}(t) - \phi(t)),
	\end{split} \label{eq:defcontLU}
	\end{align}
	with
	\begin{align*}
	\phi_{\mathrm{ref},i}(t) &= \phi_{\mathrm{LU},i}(t), \\
	v_{\phi,\mathrm{ref},i}(t) &= \frac{\mathrm{d}}{\mathrm{d}t} \phi_{\mathrm{ref},i}(t), & u_{\mathrm{ref},i}(t) = \frac{\mathrm{d}^2}{\mathrm{d}t^2} \phi_{\mathrm{ref},i}(t).
	\end{align*}
	This controller was proven to exponentially stabilize the reference gait pattern for the snake robot model \eqref{eq:SysDyn}. Furthermore, in \cite{Liljeback2012,Liljeback2013} it was shown that the average forward velocity converges exponentially fast to a velocity which depends on the parameters describing the gait pattern $\alpha, \omega, \delta$ and $\phi_0$, the friction coefficients $c_n$, $c_t$ and $c_p$, and the number of links $N_l$.
	
	\section{Economic MPC scheme for snake robot locomotion} \label{sec:EMPCpreliminaries}
	
	As mentioned in the introduction, we aim at maximizing the forward velocity of the snake robot as a reasonable objective, and in order to arrive at a simple cost function with an intuitive physical interpretation. We therefore employ $-v_t(t)$ as the cost to be minimized by the MPC optimization problem. However, other choices are possible. For instance, in order to limit the energy consumption of the snake robot, a term $\gamma u^T(k|t) u(k|t)$ with $\gamma \in \mathbb{R}_{>0}$ could be added (cf. Section \ref{sec:sim}). All our results on recursive feasibility provided in the next section are independent of the choice of the specific cost function. However, our results on the performance of the closed loop would need to be adjusted for a modified cost function.
	
	Next, we state our proposed economic MPC algorithm for snake robot locomotion. At each time step $t$, given an initial value $x(t) \in \mathbb{X}$, the following MPC optimization problem is solved.
	{\begin{problem}{(Economic MPC optimization problem)} \label{prob:EMPC}
	\begin{align*} 
	\min_{\bs u(t) \in \mathbb{U}^{N_p}} \hspace{.3cm} &J(x(t), \bs u(t)) = -\sum_{k=0}^{N_p} v_t(k|t) \\
	\text{s.t.} \hspace{.6cm} 
	&x(0|t) = x(t) \\
	&x(k+1|t) = f(x(k|t), u(k|t))\\
	&u(k|t) \in \mathbb{U} \subseteq \mathbb{R}^{N_l-1} \\
	&\hspace{2cm} k = 0, \ldots, N_p-1\\
	&x(k|t) \in \mathbb{X} \subseteq \mathbb{R}^{2N_l+4} \\
	&\hspace{2cm}  k = 0, \ldots, N_p.
	\end{align*}
	\end{problem}}
	In Problem \ref{prob:EMPC}, the snake robot's dynamics $f(x(k|t), u(k|t))$ and states $x(t)$ are given by \eqref{eq:SysDyn} and \eqref{eq:defXsnake}, respectively. We denote the solution to this MPC optimization problem by \mbox{$\bs u^*(t) = ( u^*(0|t), \ldots, u^*(N_p-1|t) ) \in \mathbb{U}^{N_p}$} and the corresponding predicted trajectory by \mbox{$\bs x^*(t) = ( x^*(0|t), \ldots, x^*(N_p|t) ) \in \mathbb{X}^{N_p+1}$}, where the asteriks signify optimality. At each timestep, the first element of the optimal input sequence $\bs u^*(t)$ is applied to the system, hence, the control input is given by $u_{\mathrm{MPC}}(t) = u^*(0|t)$.
	
	\section{Analysis of the closed loop}
	\subsection{Recursive feasibility}
	
	Recursive feasibility is a fundamental property of MPC algorithms, which is required to apply the MPC scheme to a system. Recursive feasibility means that, if a feasible solution to the MPC optimization problem at time $t=t_0$ exists, then there is a feasible solution to the problem for every $t > t_0$. Hence, recursive feasibility establishes that the control law given by the above algorithm is defined at all times. In the following, we present a sufficient condition for recursive feasibility of Problem~\ref{prob:EMPC}, which can be achieved without additional terminal costs or constraints, facilitating implementation and, more importantly, not requiring any a priori knowledge of, e.g., a desirable gait pattern. We note that the results of this section can be applied to a wider range of problems of similar structure, i.e., systems with box constraints on states adhering to double integrator dynamics. This is discussed in more detail below.
	
	The main idea of the following derivations is to provide a candidate solution to the MPC optimization Problem~\ref{prob:EMPC} based on a feasible solution from the previous time step in order to certify that the resulting control input is always defined. 
	
	Given the current state $x(t)$ (including  $v_\phi(t)$), let the input sequence $\bs u^c(t) \in \mathbb{U}^{N_p}$ and the $i$-th entry (referring to the $i$-th joint) of its $k$-th element $\bs u_i^c(k|t)$, $i \in \mathbb{I}_{[1, N_l-1]}$, be defined by the feedback law
	\begin{align} \label{eq:Defuc}
	u_i^c(k|t) &= 
	\begin{cases}
	-\sgn \left(v_{\phi,i} (k|t)\right) u_\mathrm{max} & \text{if } |v_{\phi,i}(k|t)| > T_s u_\mathrm{max} \\
	-\frac{v_{\phi,i}(k|t)}{T_s} & \text{else}
	\end{cases}
	\end{align}
	and let
	\begin{align} \label{eq:defb}
	b = \lceil\frac{v_{\phi,\mathrm{max}}}{T_s u_\mathrm{max}}\rceil.
	\end{align}
	Hence, it is always possible to steer $v_\phi(t)$ to zero in $b$ time steps. Let 
	\begin{align*}
	\bs \phi^*(t) &= ( \phi^*(0|t), \ldots, \phi^*(N_p|t)), \\
	\bs v_\phi^*(t) &= ( v_\phi^*(0|t), \ldots, v_\phi^*(N_p|t))
	\end{align*}
	denote the optimal trajectories of the joint distances and velocities at time $t$, respectively. We define the candidate input sequence $\mathbf{\tilde{u}}(t)$ by shifting the previously optimal solution for the first $N_p-b-1$ time steps and extending it by $\bs u^c(t)$, i.e.,
	\begin{equation} \label{eq:Defudi}
	\begin{split}
	& \mathbf{\tilde{u}}(t+1) = ( u^*(1|t), \ldots, u^*(N_p-b-1|t), \\ &\qquad u^c(N_p-b-1|t+1), \ldots, u^c(N_p-1|t+1)).
	\end{split}
	\end{equation}
	\begin{remark} We use the candidate input $\mathbf{\tilde{u}}(t)$ only for feasibility analysis, and do not intend to actually apply it to the snake robot. When applied to a snake robot, the forward velocity achieved with this input would be undesirable, since it steers the joint velocities $v_\phi(t)$ to zero. Therefore, no acceleration can be achieved by body shape changes and the robot is decelerated due to friction. \end{remark}
	
	The main idea behind defining this input is that it indeed steers the joint velocities to zero, as will be shown by the next lemma. We thereby ensure that the states $v_\phi(t)$ and $\phi(t)$ remain feasible thereafter and only the transient phase of the candidate input sequence $\bs u^c(t)$ and the corresponding state trajectories remain to be analyzed. In order to prove the main result of this section, we first examine the response of the joint velocities $v_\phi^{\bs u^c(t)}(t)$ when actuated by the candidate input sequence $\bs u^c(t)$. Namely, we show that the absolute value of $v_{\phi,i}^{\bs u^c}(t)$ is decreasing and its sign does not change if $|v_{\phi,i}(t)|>T_su_{\max}$. Loosely speaking, this means that the joint velocities are steered towards zero.
	\begin{lemma} \label{lemma:aux}
	Let $\bs u^c(t)$ be defined by (\ref{eq:Defuc}). Then it holds that
	\begin{itemize}
		\item[(i)] $|v_{\phi,i}(k|t)| \geq |v_{\phi,i}^{\bs u_i^c}(k+1|t)|$ for all $i \in \mathbb{I}_{[1,N_l-1]}$ and $k \in \mathbb{I}_{[0,N_p-1]}$.
		\item[(ii)] Moreover, if $|v_{\phi,i}(k|t)| > T_su_{\max}$, then $\sgn(v_{\phi,i}(k|t)) = \sgn\left(v_{\phi,i}^{u_i^c}(k+1|t)\right)$ for all $i \in \mathbb{I}_{[1,N_l-1]}$ and $k \in \mathbb{I}_{[0,N_p-1]}$.
	\end{itemize}
	\end{lemma}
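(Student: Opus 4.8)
The plan is to exploit the fact that the $v_\phi$-dynamics \eqref{eq:SysDynvphi} form a decoupled double integrator: each component satisfies $v_{\phi,i}(k+1|t) = v_{\phi,i}(k|t) + T_s u_i^c(k|t)$ independently of the other states and of the other joints. Hence I would fix an arbitrary joint index $i \in \mathbb{I}_{[1,N_l-1]}$ and time index $k \in \mathbb{I}_{[0,N_p-1]}$, abbreviate $v := v_{\phi,i}(k|t)$ and $v^+ := v_{\phi,i}^{\bs u_i^c}(k+1|t)$, and substitute the two branches of the feedback law \eqref{eq:Defuc} into this scalar recursion. Both claims then reduce to a two-case calculation, and since neither the recursion nor the feedback law couples different $k$ or different $i$, treating a single one-step transition suffices.

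In the first case, $|v| \le T_s u_{\max}$, the feedback law sets $u_i^c(k|t) = -v/T_s$, so $v^+ = v + T_s(-v/T_s) = 0$. Thus $|v^+| = 0 \le |v|$, which establishes (i); claim (ii) is vacuous here since its hypothesis fails. In the second case, $|v| > T_s u_{\max}$, the law is saturated, $u_i^c(k|t) = -\sgn(v) u_{\max}$, whence $v^+ = v - \sgn(v) T_s u_{\max} = \sgn(v)(|v| - T_s u_{\max})$. The key observation is that the strict inequality $|v| > T_s u_{\max}$ makes the factor $|v| - T_s u_{\max}$ strictly positive and strictly smaller than $|v|$. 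Positivity yields $\sgn(v^+) = \sgn(v)$, which is (ii), while $0 < |v| - T_s u_{\max} < |v|$ gives $|v^+| < |v|$, i.e. (i) with strict inequality.

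The argument is essentially algebraic, so I do not anticipate a genuine obstacle. The only point deserving care is explaining why the saturated step in the second case never overshoots zero and flips the sign: this is guaranteed precisely by the threshold in \eqref{eq:Defuc}, which switches to the exact-cancellation branch exactly when a full maximal step $T_s u_{\max}$ would exceed $|v|$, so saturation is applied only when it cannot reduce $|v|$ past zero. I would state this explicitly, since it is the mechanism making the two branches consistent.

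Finally, I would close by noting the downstream role of the lemma: iterating (i)--(ii) shows that after at most $b$ applications of $\bs u^c$ (cf. \eqref{eq:defb}) each joint velocity reaches zero with monotonically non-increasing magnitude and without ever leaving the box $[-v_{\phi,\max}, v_{\phi,\max}]$. This monotonicity is exactly what the shifted candidate sequence \eqref{eq:Defudi} relies on to preserve feasibility of $v_\phi(t)$ and, through the integrator structure, of $\phi(t)$.
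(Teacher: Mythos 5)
Your proof is correct and follows essentially the same route as the paper's: the same two-case split on $|v_{\phi,i}(k|t)|$ versus $T_s u_{\max}$, the same rewriting of the saturated step as $\sgn(v)\left(|v| - T_s u_{\max}\right)$ to read off both the magnitude decrease and the sign preservation. Your explicit remark about why saturation never overshoots zero is a nice touch of exposition, but it is the same mechanism the paper's algebra relies on implicitly.
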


	\begin{proof} (i) Analyzing the dynamics of $v_{\phi,i}(k|t)$ for some \mbox{$i \in \mathbb{I}_{[1,N_p-1]}$} and $k \in \mathbb{I}_{[0,N_p-1]}$ given by \eqref{eq:SysDynvphi} when applying $u_i^c(k|t)$ yields
	\begin{align*}
	|v_{\phi,i}^{u_i^c}(k+1|t)| &= |v_{\phi,i} (k|t) + T_s u^c_i(k|t)|.
	\intertext{First, assume that $|v_{\phi,i}(k|t)| > T_s u_{\max}$, which gives}
	|v_{\phi,i}^{u_i^c}(k+1|t)|| &= |v_{\phi,i}(k|t)-T_s \sgn\left( v_{\phi,i}(k|t)\right) u_{\max}| \\
	&< |v_{\phi,i}(k|t)|.
	\intertext{Second, assume otherwise $|v_{\phi,i}(k|t)| \leq T_s u_{\max}$, which gives}
	|v_{\phi,i}^{u_i^c}(k+1|t)|| &= |v_{\phi,i}(k|t)-T_s \frac{v_{\phi,i}(k|t)}{T_s}| = 0 \\
	&\leq |v_{\phi,i}(k|t)|.
	\end{align*}
	Combining these two results yields the desired inequality $|v_{\phi,i}(k|t)| \geq |v_{\phi,i}^{u_i^c}(k+1|t)|$.
	
	(ii) Due to  $|v_{\phi,i}(k|t)| > T_su_{\max}$ and the definition of the input $u_i^c(k|t)$ in \eqref{eq:Defuc} it holds that \mbox{$u_i^c(k|t) = -\sgn \left(v_{\phi,i}(k|t)\right)u_{\max}$.} Therefore,
	\begin{align*}
	&\sgn \left(v_{\phi,i}^{u_i^c}(k+1|t)\right) = \sgn \left(v_{\phi,i}(k|t) + T_s u_i^c(k|t) \right) \\
	= &\sgn \left( \sgn \left( v_{\phi,i}(k|t) \right) \left( |v_{\phi,i}(k|t)| - T_s u_{\max} \right) \right) \\
	= &\sgn \left(v_{\phi,i}(k|t) \right). \qedhere
	\end{align*}
	\end{proof}
	
	Note that Lemma \ref{lemma:aux}(i) already implies recursive \linebreak feasibility of $v_{\phi,i}(k|t)$, i.e., if $|v_{\phi,i}(k|t)| \leq v_{\phi,\max}$ then \mbox{$|v_{\phi,i}^{u_i^c}(k+1|t)| \leq |v_{\phi,i}(k|t)| \leq v_{\phi,\max}$.}
	
	Next, we state the main result of this section. It provides a sufficient condition for recursive feasibility of Problem \ref{prob:EMPC}. The only required assumption is existence of an initially feasible input at time $t=t_0$ as is standard in MPC feasibility analysis.
	\begin{lemma} \label{lemma:recfeas}
	Suppose that there exists an initially feasible solution $\bs u^*(t)$ to Problem \ref{prob:EMPC} at time $t = t_0$. If $N_p>b$, then there exists a feasible solution to Problem \ref{prob:EMPC} for all $t \in \mathbb{I}_{>t_0}$. \end{lemma}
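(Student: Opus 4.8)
The plan is to proceed by induction and show that feasibility of Problem~\ref{prob:EMPC} at some time $t$ implies feasibility at $t+1$; the assumed feasibility at $t_0$ is the base case. Since the closed loop applies $u_{\mathrm{MPC}}(t)=u^*(0|t)$, the successor state is $x(t+1)=x^*(1|t)$, so I would exhibit the shifted-braking candidate $\tilde{\bs u}(t+1)$ of \eqref{eq:Defudi} as a feasible point, the hypothesis $N_p>b$ being exactly what guarantees that the shifted block $u^*(1|t),\dots,u^*(N_p-b-1|t)$ has nonnegative length so that the candidate is well defined. The decisive structural observation is that the only constrained quantities in \eqref{const:states}--\eqref{const:input} are $\phi_i$, $v_{\phi,i}$ and $u_i$, and that by \eqref{eq:SysDynBegin} and \eqref{eq:SysDynvphi} each triple $(\phi_i,v_{\phi,i},u_i)$ is an independent double integrator driven only by $u_i$, while all remaining states are unconstrained. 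Feasibility can therefore be verified one joint at a time.

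I would first clear the routine parts. Along the shifted block the candidate state and input coincide with $x^*(k+1|t)$ and $u^*(k+1|t)$, so their feasibility is inherited directly from the previous optimal solution. For the braking block, input feasibility is immediate from \eqref{eq:Defuc}, since $u_i^c$ is either $\pm u_{\max}$ or $-v_{\phi,i}/T_s$ with $|v_{\phi,i}|\le T_s u_{\max}$; and feasibility of $v_{\phi,i}$ is exactly Lemma~\ref{lemma:aux}(i) together with the remark following it, because $|v_{\phi,i}|$ cannot increase along $\bs u^c$.

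The one genuinely nontrivial point, and the main obstacle, is the constraint $|\phi_i|\le\phi_{\max}$ during the braking block, since braking from a feasible velocity could a priori carry $\phi_i$ past the boundary. My plan is a displacement-comparison argument against the previous trajectory. Fix a joint $i$, let $v_0=v_{\phi,i}^*(N_p-b|t)$ be the velocity at which braking begins, and take $v_0>0$ (the case $v_0<0$ is symmetric and $v_0=0$ is trivial). By Lemma~\ref{lemma:aux} the braking velocities stay nonnegative and, by the definition \eqref{eq:defb} of $b$, reach $0$ after some $m\le b$ steps; hence $\phi_i$ is monotonically nondecreasing along the braking block and attains its largest value at the stopping step. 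Because $\bs u^c$ applies the full deceleration $u_{\max}$ while the previous input obeys $u_i^*(\cdot|t)\ge -u_{\max}$, a short induction shows that at each step $j=0,\dots,m-1$ the previous velocity dominates the braking velocity, the former being $v_{\phi,i}^*(N_p-b+j|t)$. Summing these inequalities and telescoping \eqref{eq:SysDynBegin} bounds the braking displacement by that of the previous trajectory over the same $m$ steps, so the braking value of $\phi_i$ is at most $\phi_i^*(N_p-b+m|t)$, which is $\le\phi_{\max}$ by feasibility of the previous solution; the index is admissible precisely because $m\le b$ gives $N_p-b+m\le N_p$. Monotonicity then keeps $\phi_i$ inside $[-\phi_{\max},\phi_{\max}]$ throughout, and since the velocity stays at $0$ after step $m$, $\phi_i$ does not move afterward.

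Assembling the three blocks shows that $\tilde{\bs u}(t+1)$ satisfies every input and state constraint and is therefore feasible for Problem~\ref{prob:EMPC} at $t+1$, which closes the induction step and proves the claim. I expect the comparison inequality in the third paragraph to be the crux: the feasibility of $\phi_i$ is not guaranteed by the box constraints alone, and the argument works only because braking is the minimal-displacement way to stop, so its excursion is bounded by a displacement that the previous feasible trajectory already realized at the admissible intermediate step $N_p-b+m$.
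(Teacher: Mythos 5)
Your proposal is correct, and it follows the paper's overall strategy: the same shifted-plus-braking candidate \eqref{eq:Defudi}, input and velocity feasibility handled exactly as in the paper (inheritance from $\bs u^*(t)$ plus Lemma \ref{lemma:aux}(i)), and the same key inequality $u_i^c = -u_{\max} \leq u_i^*$ used to compare the braking trajectory against the previously feasible one. Where you genuinely depart from the paper is in how the crux --- feasibility of $\phi_i$ during braking --- is organized. The paper runs a per-step, three-case analysis (full deceleration with positive velocity, with negative velocity, and proportional braking), obtaining $\phi_i^{\mathbf{\tilde{u}}}(\barx{b}+p|t+1) \leq \phi_i^*(\barx{b}+1+p|t)$ for $p \in \mathbb{I}_{[1,b]}$; since for $p=b$ the comparison index $\barx{b}+1+p = N_p+1$ falls outside the previous horizon, the paper must treat the last element of the sequence separately, proving $v_{\phi}^{\mathbf{\tilde{u}}}(N_p-1|t+1)=0$ by a contradiction argument (otherwise $v_{\phi,i}^*(N_p-b-1|t) > v_{\phi,\max}$ would have to hold). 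You avoid this entirely: by noting that braking stops after $m = \lceil v_0/(T_s u_{\max})\rceil \leq b$ steps, that the velocity keeps its sign until then (Lemma \ref{lemma:aux}(ii)) so $\phi_i$ is monotone, and that $\phi_i$ is frozen once the velocity hits zero, your single summed displacement comparison against $\phi_i^*(N_p-b+m|t)$ --- a valid index precisely because $m \leq b$ --- covers every braking step including the terminal one. The two arguments rest on the same facts, but yours replaces the paper's case analysis and terminal contradiction with a direct stopping-time argument, which is shorter and makes the role of the definition of $b$ in \eqref{eq:defb} more transparent; the paper's version, in exchange, spells out the per-step inequalities explicitly, which is what lets it state the bound for each individual element of $\bs \phi^{\mathbf{\tilde{u}}}(t+1)$.
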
 

	\begin{proof} We assume that Problem \ref{prob:EMPC} was feasible at time $t$. Then, it has to be shown that the input constraint \eqref{const:input} and both state constraints \eqref{const:states} can be satisfied at time $t+1$. As common in MPC, this is done by showing feasibility of a suboptimal candidate trajectory, which in our case results from application of the candidate input sequence \eqref{eq:Defudi}.
	
	First, we examine the input constraint. For the first \mbox{$N_p-b-1$} time steps the candidate input sequence \mbox{$\mathbf{\tilde{u}}(t+1)$} is feasible since it is the same as the shifted optimal input ${\bs u}^*(t)$ which is feasible by assumption.
	
	For the remaining time steps, the candidate input sequence is feasible, since $u^c_i(k|t)$ is either defined by 
	\begin{align*}
	|u_i^c(k|t+1)| &= |-\sgn \left(v_{\phi,i} (k|t+1)\right) u_{\max}| \leq u_{\max}
	\intertext{or by}
	|u_i^c(k|t+1)| &= |-\frac{v_{\phi,i}(k|t+1)}{T_s}| \leq \frac{|T_s u_{\max}|}{T_s} = u_{\max}.
	\end{align*}
	It follows that the candidate input trajectory satisfies the input constraints for all $k \in \mathbb{I}_{[0,N_p-1]}$.
	
	Second, we investigate the response of the joint velocities $\bs v_\phi(k|t+1)$ to the candidate input sequence.
	For the first $N_p-b$ steps it holds that the constraint on $v_\phi^{\mathbf{\tilde{u}}}(k|t+1)$ is satisfied, since the solution of Problem \ref{prob:EMPC} at time $t$ is feasible and, by the construction of the candidate input sequence \eqref{eq:Defudi}, the resulting trajectories are the same.
	For the remaining time steps, constraint satisfaction is provided by Lemma \ref{lemma:aux}(i).
	
	Finally, we examine the response of the joint distances $\phi(k|t+1)$ to the candidate input. For notational convenience, we define $\barx{b} = N_p - b$. Again, by the construction of the candidate input sequence \eqref{eq:Defudi}, the trajectory is the same as the optimal trajectory for the first $N_p-b+1$ time steps.
	\begin{align}
	&\phi^{\mathbf{\tilde{u}}}(k|t+1) = \phi^* (k+1|t), \qquad k \in \mathbb{I}_{[0,N_p-b]}. \nonumber
	\intertext{For the remaining time steps, we obtain $p \in \mathbb{I}_{[1,b]}$ and}
	&\phi^{\mathbf{\tilde{u}}}(\barx{b}+p|t+1) \nonumber \\ 
	= &\phi^{\mathbf{\tilde{u}}}(\barx{b}-1+p|t+1) +T_s v_\phi^{\mathbf{\tilde{u}}}(\barx{b}-1+p|t+1), \nonumber
	\intertext{and by recursively inserting the system dynamics, first \eqref{eq:SysDynBegin} and then \eqref{eq:SysDynvphi},}
	= &\phi^* (\barx{b}+1|t) + T_s \sum_{q = 1}^{p} v_\phi^{\mathbf{\tilde{u}}}(\barx{b}-1+q|t+1) \label{eq:Phiudidyn1} \\
	\begin{split} = &\phi^*(\barx{b}+1|t) + T_s \left( \sum_{q = 1}^{p} v_\phi^* (\barx{b}|t) \right. \\    &\qquad + T_s \left. \sum_{r=1}^{q} u^c(\barx{b}-2+r|t+1) \right).
	\end{split} \label{eq:Phiudidyn2}
	\end{align}

	Again, the first $N_p-b+1$ time steps are feasible because the candidate input sequence $\mathbf{\tilde{u}}(t+1)$ is the same as the previously feasible solution $\bs u^*(t)$. The main idea of proving feasibility of the remaining time steps is to bound the candidate joint distances by the previously feasible trajectory. 
	
	In the following, we need to consider three different cases, depending on the candidate input $u^c(k|t)$. In the first two cases, we compare the candidate trajectory directly to the previously feasible trajectory of the joint distances. In the third case, we show feasibility by induction.
	
	First, assume \mbox{$v_{\phi,i}^{\mathbf{\tilde{u}}}(\barx{b}-2+p|t+1) > T_s u_{\max}$} for some $p \in \mathbb{I}_{[1,b]}$ which implies 
	\begin{align*}
	v_{\phi,i}^{\mathbf{\tilde{u}}}(\barx{b}-2+r|t+1) &> T_s u_{\max} > 0
	\intertext{and}
	u^c_i(\barx{b}-2+r|t+1) &= -u_{\max} \\ 
	&\leq u^*_i(\barx{b}-1+r|t)
	\end{align*}
	for all $r \in \mathbb{I}_{[1,p]}$ by Lemma \ref{lemma:aux} and also 
	\begin{align*}
	v_{\phi,i}^{\mathbf{\tilde{u}}}(\barx{b}-1+p|t+1)>0
	\end{align*}
	due to Lemma \ref{lemma:aux}(ii). Then, inserting \eqref{eq:Phiudidyn2} yields
	\begin{align*}
	&\phi_i^{\mathbf{\tilde{u}}}(\barx{b}+p|t+1) \\
	= &\phi_i^*(\barx{b}+1|t) + T_s \sum_{q = 1}^{p} \Bigg( v_{\phi,i}^*(\barx{b}|t) \\ &\qquad + T_s \sum_{r=1}^{q} u_i^c(\barx{b}-2+r|t+1) \Bigg) \\
	\leq &\phi_i^*(\barx{b}+1|t)  + T_s \sum_{q = 1}^{p} \left( v_{\phi,i}^*(\barx{b}|t) + T_s \sum_{r=1}^{q} u^*_i(\barx{b}-1+r|t) \right) \\
	= &\phi_i^*(\barx{b}+1+p|t) \\
	\leq &\phi_{\max}
	\end{align*}
	and, by inserting \eqref{eq:Phiudidyn1},
	\begin{align*}
	&\phi_i^{\mathbf{\tilde{u}}}(\barx{b}+p|t+1) \\ = &\phi_i^*(\barx{b}+1|t) + T_s \sum_{q = 1}^{p} v_{\phi,i}^{\mathbf{\tilde{u}}}(\barx{b}-1+q|t+1) \\
	> &\phi_i^*(\barx{b}+1|t) \\
	\geq &-\phi_{\max}.
	\end{align*}
	
	Second, if $v_{\phi,i}^{\mathbf{\tilde{u}}}(\barx{b}-2+p|t+1) < -T_s u_{\max}$, the arguments are the same as above and are therefore omitted.
	
	Third, if $|v_{\phi,i}^{\mathbf{\tilde{u}}}(\barx{b}-2+p|t+1)| \leq T_s u_{\max}$ holds, which implies \begin{align*}
		u^c_i(\barx{b}-2+p|t+1) &= \linebreak -\frac{1}{T_s}v_{\phi,i}^{\mathbf{\tilde{u}}}(\barx{b}-2+p|t+1),
	\end{align*}
	then the joint distance $\phi_i^{\mathbf{\tilde{u}}}(\barx{b}+p|t+1)$ remains constant and, hence, feasible by induction:
	\begin{align*}
	&\phi_i^{\mathbf{\tilde{u}}}(\barx{b}+p|t+1) \\
	= &\phi_i^{\mathbf{\tilde{u}}}(\barx{b}-1+p|t+1) + T_s v_{\phi,i}^{\mathbf{\tilde{u}}}(\barx{b}-2+p|t+1) \\ &\qquad + T_s^2 u_i^c (\barx{b}-2+p|t+1) \\
	= &\phi_i^{\mathbf{\tilde{u}}}(\barx{b}-1+p|t+1).
	\end{align*}
	
	Combining these three results yields feasiblity for all elements of the sequence $\bs \phi^{\mathbf{\tilde{u}}}(t+1)$ except \linebreak for the last one. Hence, it only remains to show that the last element $\phi^{\mathbf{\tilde{u}}}(N_p|t+1)$ of the \mbox{sequence} is feasible. If $v_{\phi}^{\mathbf{\tilde{u}}} (N_p-1|t+1) = 0$ then \mbox{$\phi^{\mathbf{\tilde{u}}}(N_p|t+1)= \phi^{\mathbf{\tilde{u}}}(N_p-1|t+1)$} would be feasible as shown above. In the following, we show that indeed $v_{\phi}^{\mathbf{\tilde{u}}} (N_p-1|t+1) = 0$ holds, by a contradiction argument to feasibility of the previous solution $\bs u^*(t)$.
	
	First, assume otherwise $v_{\phi,i}^{\mathbf{\tilde{u}}}(N_p-1|t+1) \neq 0$ for some $i \in \mathbb{I}_{[1,N_l-1]}$. 

	If $v_{\phi,i}^{\mathbf{\tilde{u}}}(N_p-2|t+1) = 0$, then $u^c_i(N_p-2|t+1) = 0$ holds and thus $v_{\phi,i}^{\mathbf{\tilde{u}}}(N_p-1|t+1) = 0$, which is a contradiction. Hence, $v_{\phi,i}^{\mathbf{\tilde{u}}}(N_p-2|t+1) \neq 0$.
	
	If $|v_{\phi,i}^{\mathbf{\tilde{u}}}(N_p-2|t+1)| \leq T_s u_{\max}$, then, by \eqref{eq:Defuc}, \mbox{$u_i^c(N_p-2|t+1) = -\frac{1}{T_s} v_{\phi,i}^{\mathbf{\tilde{u}}}(N_p-2|t+1)$} which yields $v_{\phi,i}^{\mathbf{\tilde{u}}}(N_p-1|t+1) = 0$ due to \eqref{eq:SysDynvphi}.
	Again, this is a contradiction, so $|v_{\phi,i}^{\mathbf{\tilde{u}}}(N_p-2|t+1)| > T_s u_{\max}$.
	
	Moreover, if $v_{\phi,i}^{\mathbf{\tilde{u}}}(N_p-2|t+1) > T_s u_{\max}$ then this is also true for every time step before where $u_i^c(k|t+1)$ was applied, due to Lemma \ref{lemma:aux}. Furthermore, this implies \mbox{$v_{\phi,i}^{\mathbf{\tilde{u}}}(N_p-1|t+1)>0$} because of Lemma \ref{lemma:aux}(ii). Repeatedly inserting the system dynamics \eqref{eq:SysDynvphi} into this equation yields
	\begin{align*}
	&0 < v_{\phi,i}^{\mathbf{\tilde{u}}}(N_p-1|t+1) \\
	= &v_{\phi,i}^*(N_p-b-1|t) + T_s \sum_{q=1}^{b} u_i^c(N_p-b-2+q|t+1)  \\
	= &v_{\phi,i}^*(N_p-b-1|t) - T_s b u_{\max}  \\
	\leq &v_{\phi,i}^*(N_p-b-1|t) - v_{\phi,\max} \\
	\Rightarrow &v_{\phi,\max} < v_{\phi,i}^*(N_p-b-1|t),
	\end{align*}
	which is a contradiction to feasibility of the optimal solution at time $t$. Lastly, if $v_{\phi,i}^{\mathbf{\tilde{u}}}(N_p-2|t+1) < T_s u_{\max}$ the same contradiction to \mbox{feasibility} of the optimal solution is achieved by the same arguments. Hence, \mbox{$v_{\phi,i}^{\mathbf{\tilde{u}}}(N_p-1|t+1) = 0$} and $\phi^{\bs u_{di,i}}(N_p|k+1)= \phi^{\mathbf{\tilde{u}}}(N_p-1|t+1)$ which concludes the proof.
	\end{proof}
	
	\begin{remark} Note that in this proof we only investigated the constrained states $\phi(t)$, $v_\phi(t)$ and the input $u(t)$, which adhere to double integrator dynamics. Neither the additional states nor the cost function enter our analysis. This result can therefore be applied to any system with constraints on a \mbox{double} integrator subsystem, possibly additional unconstrained states and an arbitrary cost function, which is found in many different kinds of mobile robot applications. \end{remark}
	
	\subsection{Performance guarantees} \label{sec:performance}
	
	In this section, we briefly discuss performance guarantees for the proposed economic MPC scheme in terms of achieving a certain benchmark velocity. However, a thorough presentation of our results is out of the scope of this paper. The analysis in this section is tailored to an economic MPC scheme which aims to maximize the snake robot's forward velocity. In the following, the cost function in \mbox{Problem \ref{prob:EMPC}} is therefore chosen as \mbox{$J(x(t), \bs u(t)) = - \sum_{k=0}^{N_p} v_t(k|t)$}, as discussed in the previous sections. Hence, the results in this section would need to be adapted for a different cost function.
	
	In the literature, convergence to an optimal periodic orbit for economic MPC schemes is usually shown by exploiting certain dissipativity properties of the system; see for instance \cite{MUller2017_EconomicandDistributed,angeli2012,Muller2015}. Due to the complexity of the snake robot model, showing such a dissipativity seems not to be possible. 
	Instead, we assume existence of an auxiliary controller, which can sufficiently accelerate the snake robot, and yields a cost which is upper bounded.
	This is detailed below.
	Moreover, to properly state our result below, we additionally need to make the reasonable assumption that the economic MPC scheme will keep a certain velocity level once it has reached it. 

	More precisely, let $v_t(t)$ be the current velocity and let $\tilde{v}_t \in \mathbb{R}$ be a benchmark forward velocity used for performance characterization. 
	Then, we assume that the set $\{ x(t) \in \mathbb{X} | v_t(t) \geq \tilde{v}_t \}$ is forward invariant under the proposed economic MPC scheme.
	\begin{assumption} \label{assum:inv}
		Define $\mathcal{V} = \{ x(t) \in \mathbb{X} | v_t(t) \geq \tilde{v}_t \}$, where $\tilde{v}_t \in \mathbb{R}_{>0}$. Then, the set $\mathcal{V}$ is invariant in the economic MPC closed loop.
	\end{assumption}
	Moreover, for a given sampling time $T_s \in \mathbb{R}_{>0}$, denote by $\epsilon \in \mathbb{R}$ the solution of
	\begin{equation*}
	\begin{split}
	\epsilon := \max_{x(t) \in \mathbb{X}} \hspace{.5cm} & \frac{1}{T_s} (v_t(t) - v_t(t+1)).
	\end{split}
	\end{equation*}
	This constant $\epsilon$ is a measure for how much the snake robot can slow down from one time step to the next, i.e., \mbox{$v_t(t+1) \geq v_t(t)-T_s \epsilon$} holds for all $x(t) \in \mathbb{X}$. Note that it is independent of $u(t)$, since the input does not directly enter the dynamics of the forward velocity $v_t(t+1)$, if we investigate only one time step. 
	\begin{assumption} \label{assum:conv}
		Let $\tilde{v}_t \in \mathbb{R}_{>0}$. There exists some\footnote{A continuous function $\beta:\mathbb{R}_{\geq0}\times\mathbb{R}_{\geq 0} \rightarrow \mathbb{R}_{\geq0}$  is said to be of class $\mathcal{KL}$ if $\beta(\cdot,t)$ is strictly increasing, $\beta(0,t)=0$, and $\beta(r,\cdot)$ is decreasing with $\lim_{k\rightarrow\infty}\beta(r,k)=0$, cf.~\cite{Khalil2013}.} 
	$\beta\in\mathcal{KL}$,	such that for every $x(t) \in \mathbb{X}$ with $v_t(t) < \tilde{v}_t$ there exists an input sequence $\bs{\bar{u}}(t) \in \mathbb{U}^{N_p}$ which satisfies
	\begin{align*}
		\tilde{v}_t - v_t^{\bs{\bar{u}}}(t+k_1) \leq \beta(\tilde{v}_t - v_t(t), k_1), \\
		\sum_{i=1}^{k_2} v_t^{\bs{\bar{u}}}(t+i) \geq mv_t(t) - T_s \epsilon,
	\end{align*}
	for all $k_1 \in \mathbb{I}_{[0,N_p]}$ and for all $k_2 \in \mathbb{I}_{[1,N_p]}$.
	\end{assumption}
	Assumption \ref{assum:conv} can be justified by investigating the closed-loop trajectory of the forward velocity when applying the standard lateral undulation controller given by \eqref{eq:defcontLU}. \mbox{Combining} these two assumptions yields the desired performance result, i.e., convergence of the closed loop to the set $\mathcal{V}$. Hence, the application of the proposed economic MPC scheme leads to an acceleration sufficient to reach the velocity $\tilde{v}_t$ and, once it is achieved, remains greater than this benchmark velocity.
	\begin{lemma} \label{lemma:conv}
		Let Assumptions \ref{assum:inv} and \ref{assum:conv} hold and assume that Problem \ref{prob:EMPC} is initially feasible at time $t=t_0$ with $N_p > b$. Then, Problem \ref{prob:EMPC} is feasible for all $t \in \mathbb{I}_{\geq t_0}$ and the closed loop converges to the set $\mathcal{V}$.
	\end{lemma}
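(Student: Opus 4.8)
The plan is to treat the two conclusions separately: feasibility follows at once from the recursive feasibility result, while convergence is obtained by comparing the economic MPC cost against the auxiliary controller furnished by Assumption \ref{assum:conv}. Since Problem \ref{prob:EMPC} is feasible at $t=t_0$ and $N_p>b$, Lemma \ref{lemma:recfeas} yields a feasible solution for every $t \in \mathbb{I}_{>t_0}$; together with initial feasibility this gives feasibility for all $t \in \mathbb{I}_{\geq t_0}$, so the closed-loop law $u_{\mathrm{MPC}}(t)=u^*(0|t)$ is well defined at all times. The substantive part is then convergence to $\mathcal{V}$.

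Because $\mathcal{V}$ is invariant under the closed loop by Assumption \ref{assum:inv}, it suffices to show that the trajectory reaches $\mathcal{V}$, i.e. that the gap $d(t):=\tilde{v}_t - v_t(t)$ becomes nonpositive in finite time or tends to zero. I would fix a time $t$ with $x(t)\notin\mathcal{V}$, so that $v_t(t)<\tilde{v}_t$ and the candidate $\bs{\bar{u}}(t)\in\mathbb{U}^{N_p}$ of Assumption \ref{assum:conv} is feasible for Problem \ref{prob:EMPC}. Optimality of $\bs u^*(t)$ for the cost $J=-\sum_{k=0}^{N_p}v_t(k|t)$ then gives $\sum_{k=0}^{N_p}v_t^*(k|t)\geq\sum_{k=0}^{N_p}v_t^{\bs{\bar{u}}}(t+k)$, and inserting the $\mathcal{KL}$ bound of Assumption \ref{assum:conv} lower-bounds the right-hand side by $(N_p+1)\tilde{v}_t-\sum_{k=0}^{N_p}\beta(d(t),k)$. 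Thus along its optimal trajectory the economic MPC is forced to predict a cumulative forward velocity close to the benchmark whenever $d(t)$ is small.

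The decisive step is to convert this horizon-cumulative statement into a one-step statement about the applied velocity $v_t(t+1)=v_t^*(1|t)$. Here the deceleration constant $\epsilon$ is essential: since $v_t^*(k+1|t)\geq v_t^*(k|t)-T_s\epsilon$ holds along any feasible trajectory, each later predicted velocity satisfies $v_t^*(k|t)\leq v_t^*(1|t)+(k-1)T_s\epsilon$, which lets me lower-bound $v_t^*(1|t)$ by the normalized predicted cumulative velocity minus a transfer term controlled through the second inequality of Assumption \ref{assum:conv}. Combining this with the $\mathcal{KL}$ lower bound yields a recursion of the form $d(t+1)\leq\hat{\beta}(d(t))$ with $\hat{\beta}(r)<r$, from which $d(t)\to 0$, and hence $v_t(t)\to\tilde{v}_t$, follows; once $d(t)\leq 0$ the invariance from Assumption \ref{assum:inv} keeps the trajectory in $\mathcal{V}$ thereafter.

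I expect this last transfer, from the horizon-averaged optimality that is characteristic of economic MPC to a pointwise, single-step velocity improvement, to be the main obstacle, precisely because the stage cost rewards the \emph{sum} of predicted velocities rather than the velocity actually realized by $u_{\mathrm{MPC}}(t)$. It is exactly this gap that the tailored quantities $\epsilon$ and the two bounds of Assumption \ref{assum:conv} are designed to close, standing in for the dissipativity property that is unavailable for the snake robot model and would otherwise provide the monotone Lyapunov-type decrease used in the economic MPC literature.
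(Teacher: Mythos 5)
The paper itself does not contain a proof of Lemma~\ref{lemma:conv}: it is explicitly omitted and delegated to \cite{Nonhoff2018}, so your attempt can only be checked against the definitions and assumptions given in the paper. On that basis, the feasibility half of your argument is fine (Lemma~\ref{lemma:recfeas} settles it immediately), but the convergence half breaks down at exactly the step you call decisive. The constant $\epsilon$ bounds \emph{deceleration} only: from $v_t^*(k+1|t) \geq v_t^*(k|t) - T_s\epsilon$ one can conclude the lower bound $v_t^*(k|t) \geq v_t^*(1|t) - (k-1)T_s\epsilon$ on later predicted velocities, \emph{not} the upper bound $v_t^*(k|t) \leq v_t^*(1|t) + (k-1)T_s\epsilon$ that you assert. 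To convert ``the predicted sum $\sum_{k} v_t^*(k|t)$ is large'' into ``$v_t^*(1|t)$ is large,'' you need precisely an upper bound on how much later velocities may exceed the first one, i.e.\ an \emph{acceleration} bound; $\epsilon$ provides none, and nothing in Problem~\ref{prob:EMPC} or the assumptions excludes an optimizer that decelerates at the first step and accelerates strongly afterwards. This is exactly the economic-MPC pathology you name in your closing paragraph, but your argument does not rule it out — it assumes it away through the reversed inequality.

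Moreover, even if that transfer step were repaired, the recursion it would produce is, after the algebra, $d(t+1) \leq -\tfrac{1}{N_p}\,d(t) + \tfrac{1}{N_p}\sum_{k=0}^{N_p}\beta(d(t),k) + \tfrac{N_p-1}{2}\,T_s\epsilon$, which is \emph{not} of the claimed form $d(t+1)\leq\hat{\beta}(d(t))$ with $\hat{\beta}(r)<r$: the additive term proportional to $T_s\epsilon$ does not vanish as $d(t)\to 0$, and a class-$\mathcal{KL}$ function need not satisfy $\beta(r,0)\leq r$, so for fixed $N_p$ the $d(t)$-dependent part can exceed $d(t)$ as well. At best such a recursion yields practical convergence to a neighbourhood of $\mathcal{V}$, not convergence to $\mathcal{V}$ as the lemma requires. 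A smaller but real issue: invoking optimality of $\bs u^*(t)$ against $\bs{\bar{u}}(t)$ requires the candidate of Assumption~\ref{assum:conv} to be feasible for Problem~\ref{prob:EMPC} including the state constraints \eqref{const:states}; the assumption only guarantees $\bs{\bar{u}}(t)\in\mathbb{U}^{N_p}$ together with the two velocity bounds, so this must be added as a hypothesis or argued separately (note that the lateral-undulation controller used to motivate the assumption is reported in Section~\ref{sec:sim} to violate the state constraints during transients).
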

	The proof of Lemma \ref{lemma:conv} is omitted in this paper.
	All details can be found in \cite{Nonhoff2018}. Summarizing, we give performance guarantees for the economic MPC closed-loop snake robot system which are based on assumptions that can be justified, e.g., by simulations of existing standard controllers.
	
	\section{Numerical Simulations} \label{sec:sim}
	
	Having established recursive feasibility, we demonstrate the effectiveness of the proposed economic MPC scheme and compare it to a standard lateral undulation controller through numerical simulations. Our proposed ecnomic MPC algorithm was implemented in CasADi \cite{Andersson2018}. For the sake of comparability we choose to use a similar set of parameters as in \cite{Liljeback2012,Marafioti2014}, and \cite{Liljeback2013}. The snake robot's $N_l = 9$ links all have the same mass \mbox{$m=1$ kg}, the same length \mbox{$l=0.14$ m} and friction coefficients in normal and tangential direction $c_n=3$ and $c_t=1$, respectively. We set \mbox{$t_0=0$ s} and initialize the snake robot with \mbox{$\phi(0) = [0,0.01,-0.01,0.01,0,0,0.01,-0.01]^T$ m}, \mbox{$v_\phi(0) = 0 $ m/s}, \mbox{$v_t(0) = 0$ m/s} and \mbox{$v_n(0)=0$ m/s}. The sampling time is chosen as \mbox{$T_s = 0.05$ s} and the prediction horizon is set to $N_p=20$ which is equivalent to \mbox{$1$ s}. Finally, the constraint sets are given as
	\begin{align*}
	\phi_{\max} &= 0.052 \text{ m}, \\
	v_{\phi,\max} &= 0.109 \text{ m/s}, \\
	u_{\max} &= 0.2276 \text{ m/s\textsuperscript{2}}.
	\end{align*}
	
	In order to confirm recursive feasibility we verify the conditions of Lemma \ref{lemma:recfeas} and obtain $b = 10 < N_p = 20$ from \eqref{eq:defb}. Hence, the ecnomic MPC scheme is feasible for all $t \in \mathbb{I}_{>0}$, since it is easy to see that it is initially feasible at $t=0$ for $\bs u(0) = ( 0, \ldots, 0)$.
	
	For comparison, we consider the standard lateral \linebreak undulation controller (\ref{eq:defcontLU}) which is taken from \cite{Liljeback2012,Liljeback2013}. The lateral undulation gait pattern parameters and the tuning parameters of the controller are adjusted such that it fully utilizes the entire constraint sets once it periodically applies the lateral undulation gait pattern and reaches a high asymptotic \mbox{average} forward velocity. Therefore, the gait \linebreak pattern parameters are chosen as \mbox{$\alpha = 0.05$ m}, \mbox{$\omega = 120$ deg/s}, \mbox{$\delta = 40$ deg} and \mbox{$\phi_0 = 0$ m}. The tuning parameters of the controller are taken from \cite{Liljeback2012,Liljeback2013} and are set to \mbox{$k_d = 5$ 1/s} and \mbox{$k_p = 20$ 1/s\textsuperscript{2}}.
		
	\subsection{Economic MPC vs. Lateral Undulation}
	
	In the following, we first compare the velocity achieved by both controllers, i.e., the economic MPC scheme and the standard lateral undulation controller. Additionally, we highlight below the ability of our proposed economic MPC approach to incorporate arbitrary objectives in the optimization by accounting for energy consumption in the objective function, and again, compare it to the standard lateral undulation controller.
	
	\begin{figure}[t]	
		\setlength\fwidth{.9\columnwidth}
		\centering
		\begin{footnotesize}
		\input{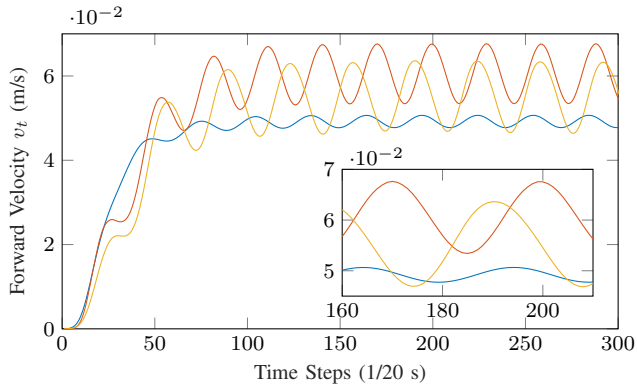}
		\end{footnotesize}
		\caption{Closed-loop trajectories of the forward velocity $v_t(t)$ for lateral undulation (blue) and the economic MPC scheme for values of $\gamma = 0$ (red) and $\gamma = 0.025$ (yellow).} \label{fig:velocititesEMPC}
	\end{figure}
	
	In the following, we set the cost function in the economic MPC optimization Problem~\ref{prob:EMPC} to $J(x(t), \bs u(t)) =\sum_{k=0}^{N_p} -v_t(k|t) + \gamma u^T(k|t) u(k|t)$, where $\gamma \in \mathbb{R}_{\geq 0}$. Here, the term $\gamma u^T(k|t) u(k|t)$ accounts for consumed energy, which is investigated below. We compare the proposed economic MPC algorithm for two different values of $\gamma$ with the standard lateral undulation controller. In the first experiment, we choose $\gamma =0$, which yields the cost function analyzed in Section \ref{sec:performance}, and in the second experiment, we choose $\gamma = 0.025$, which means that energy consumption is taken into account in the optimization.
	
	Figure \ref{fig:velocititesEMPC} shows the resulting closed-loop forward \mbox{velocity} for the lateral undulation controller given by (\ref{eq:defcontLU}) and the economic MPC scheme. One can identify a transient phase until approximately $100$ time steps and a periodic orbit afterwards. The asymptotic average velocity achieved by economic MPC for both values of $\gamma$ is clearly superior to lateral undulation. 
	
	This is achieved by a more efficient gait pattern: In \mbox{Figure \ref{fig:vphiEMPC}}, the trajectories of the joint velocities of joint $3$ are illustrated together with the corresponding constraint $\pm v_{\phi,\max}$ for both controllers. Surprisingly, a gait pattern different to lateral undulation emerges from economic MPC. Instead of a sinusoid as in lateral undulation, the proposed controller leads to an approximately piece-wise linear \mbox{trajectory} which results in the observed performance gain. Moreover, the economic MPC scheme achieves constraint satisfaction while the lateral undulation controller violates the constraints during the transient phase, which explains the qualitatively different closed-loop forward velocity during the first $50$ time steps.
	
	\begin{figure}[t]
		\setlength\fwidth{.9\columnwidth}
		\centering
		\vspace{9pt}
		\begin{footnotesize}
		  \hspace{-1.6em}
			\input{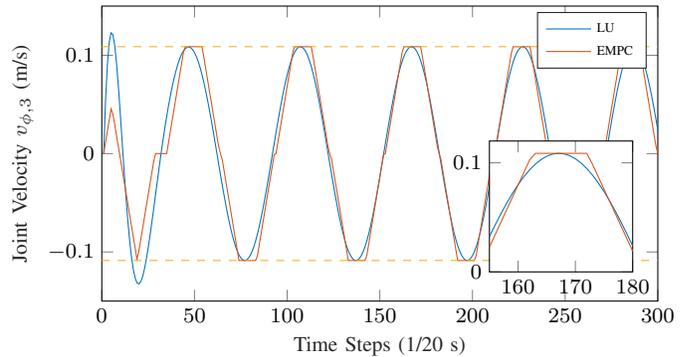}
		\end{footnotesize}
		\caption{Closed-loop trajectories $v_{\phi,i}(t)$ for $i = 3$ and constraints $\pm v_{\phi,\max}$ for $\gamma = 0$ and lateral undulation.} \label{fig:vphiEMPC}
	\end{figure}
	
	The superior velocity comes at the price of an increased energy consumption, which is investigated next. In order to compare the asymptotic average energy consumption and forward velocity, we computed the average energy consumption and velocity over the last $200$ time steps of our simulations. We consider $E=\sum_{t = 100}^{300} \frac{u^T(t)u(t)}{200}$ as a measure for the consumed energy by the controllers, while the asymptotic average velocity is computed by $v_{av} = \sum_{t=100}^{300} \frac{v_t(t)}{200}$. The results of this comparison are given in Table~\ref{tab}.
	
	\begin{table}[h]
		\caption{Comparison of the asymptotic average energy consumption and forward velocity}
		\label{tab}
		\begin{center}
			\begin{tabular}{|c|c|c|c|c|}
				\cline{2-5}
				\multicolumn{1}{c}{}& \multicolumn{2}{|c|}{Energy $E$} & \multicolumn{2}{c|}{Avg. velocity $v_{av}$} \\ 
				\hline
				Gait pattern & absolute & relative & absolute & relative\\
				\hline
				LU & $0.2072$ & $100.0 \%$ & $0.0494$ & $100 \%$ \\
				\hline
				EMPC, $\gamma=0$ & $0.2624$ & $126.6 \%$ & $0.0609$ & $123.3 \%$ \\
				\hline
				EMPC, $\gamma=0.025$ & $0.1979$ & $ 95.5 \%$ & $0.0554$ & $112.1 \%$ \\
				\hline
			\end{tabular}
		\end{center}
	\end{table}

	The data demonstrates that the economic MPC algorithm reaches a higher asymptotic average forward velocity in both cases, while it simultaneously leads to a lower energy consumption if the value of $\gamma$ is tuned to take energy consumption into account. Hence, the advantage of the proposed economic MPC scheme is twofold: It achieves a higher performance in terms of forward velocity as well as in terms of energy consumption by finding a superior gait pattern while ensuring constraint satisfaction.
	
	\subsection{Actuator fault}
	
	We next investigate the case of a joint failure occuring during operation to specifically illustrate the advantage of integrating the choice of the gait pattern into the closed loop. We simulate the scenario of joint $4$ blocking after \mbox{$10$ s}, and we therefore fix \mbox{$v_{\phi,4} \equiv 0$ m/s} afterwards. For this experiment, we set $\gamma = 0$ in the cost function and compare an economic MPC algorithm which has access to information about the failure, i.e., where we set \mbox{$v_{\phi,4} \equiv 0$ m/s} and \mbox{$u_4 \equiv 0$ m/s\textsuperscript{2}} in the dynamic constraints in Problem \ref{prob:EMPC}, with a fault-unaware economic MPC scheme. 
	
	Figure \ref{fig:FaultVelocities} shows the resulting closed-loop velocity for both controllers. It can be observed, that the fault-aware controller achieves a slightly higher asymptotic average velocity after the failure has occured. \mbox{Computing} the average velocity as in the previous subsection by \mbox{$v_{av} = \sum_{t = 300}^{500} \frac{v_t(t)}{200}$} yields \mbox{$0.0435$ m/s} for the algorithm with access to information about the failure and \mbox{$0.0418$ m/s} for the fault-unaware \linebreak algorithm.
	
	Moreover, in Figure \ref{fig:FaultPhi} the corresponding joint distances for joint $i = 3$ are illustrated. It becomes apparent that the economic MPC scheme adapts the gait pattern once the failure occurs in order to reach a higher forward velocity.
	
	\begin{figure}[t]
		\setlength\fwidth{.9\columnwidth}
		\centering
		\begin{footnotesize}
		\input{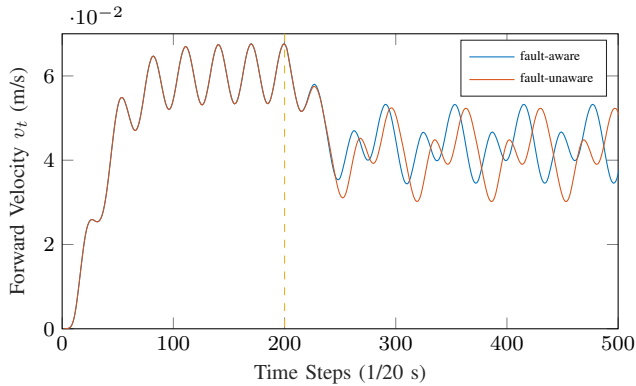}
		\end{footnotesize}
		\caption{Closed-loop trajectories $v_{t}(t)$ for the economic MPC scheme with and without information about the actuator fault.} \label{fig:FaultVelocities}
	\end{figure}

	\section{Conclusion}
	
	In this work, we proposed an economic MPC scheme for snake robot locomotion that integrates the choice of the gait pattern into the closed loop. We proved recursive feasibility of the economic MPC scheme in the closed loop and briefly sketched performance guarantees. We compared the proposed economic MPC algorithm to a standard lateral undulation controller from the literature and demonstrated that economic MPC is able to improve the snake robot's performance in terms of velocity and energy consumption at the same time, while ensuring constraint satisfaction.
	
	The theoretical foundations and insights gained by this work serve as a first step towards applying MPC to the control of snake robot locomotion with different \mbox{performance} criteria, subject to additional constraints and in the presence of \mbox{obstacles}. We believe that the ability to implicitly choose a gait pattern online through the economic MPC algorithm will be particularly beneficial for utilizing obstacles for locomotion, which is part of our ongoing work.
	
		\begin{figure}[t]
		\setlength\fwidth{.9\columnwidth}
		\centering
		\begin{footnotesize}
		\input{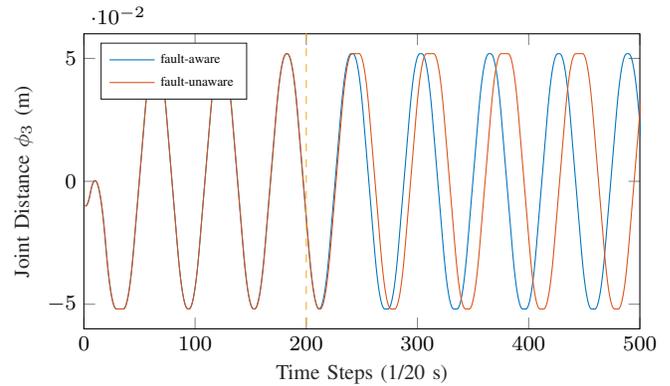}
		\end{footnotesize}
		\caption{Closed-loop trajectories $\phi_3(t)$ for the economic MPC scheme with and without information about the actuator fault.} \label{fig:FaultPhi}
	\end{figure}
	
	\bibliography{Bibliography}
	\bibliographystyle{ieeetr}
\end{document}